\theoremstyle{definition}
\newtheorem{theorem}{Theorem}
\newtheorem{definition}{Definition}
\newtheorem{example}{Example}
\newtheorem{lemma}{Lemma}
\newtheorem{corollary}{Corollary}
\theoremstyle{remark}
\newtheorem*{remark}{Remark}
\begin{document}
\tikzstyle{line} = [draw]
\tikzstyle{arrow} = [draw, ->]
\tikzstyle{vertex} = [draw, circle, minimum size=1cm]
\title{Stable Noncrossing Matchings}
\author[1]{Suthee Ruangwises\thanks{\texttt{ruangwises.s.aa@m.titech.ac.jp}}}
\author[1]{Toshiya Itoh\thanks{\texttt{titoh@c.titech.ac.jp}}}
\affil[1]{Department of Mathematical and Computing Science, Tokyo Institute of Technology, Tokyo, Japan}
\date{}
\maketitle

\begin{abstract}
Given a set of $n$ men represented by $n$ points lying on a line, and $n$ women represented by $n$ points lying on another parallel line, with each person having a list that ranks some people of opposite gender as his/her acceptable partners in strict order of preference. In this problem, we want to match people of opposite genders to satisfy people's preferences as well as making the edges not crossing one another geometrically. A \textit{noncrossing blocking pair} w.r.t. a matching $M$ is a pair $(m,w)$ of a man and a woman such that they are not matched with each other but prefer each other to their own partners in $M$, and the segment $(m,w)$ does not cross any edge in $M$. A \textit{weakly stable noncrossing matching} (WSNM) is a noncrossing matching that does not admit any noncrossing blocking pair. In this paper, we prove the existence of a WSNM in any instance by developing an $O(n^2)$ algorithm to find one in a given instance.

\textbf{Keywords:} stable matching, stable marriage problem, noncrossing matching, geometric matching
\end{abstract}

\section{Introduction}
The stable marriage problem is one of the most classic and well-studied problems in the area of matching under preferences, with many applications in other fields including economics \cite{gusfield,roth}. We have a set of $n$ men and a set of $n$ women, with each person having a list that ranks some people of opposite gender as his/her acceptable partners in order of preference. A \textit{matching} is a set of disjoint man-woman pairs. A \textit{blocking pair} w.r.t. a matching $M$ is a pair of a man and a women that are not matched with each other in $M$ but prefer each other to their own partners. The goal is to find a \textit{stable matching}, a matching that does not admit any blocking pair.

On the other hand, the noncrossing matching problem is a problem in the area of geometric matching. We have a set of $2n$ points lying on two parallel lines, each containing $n$ points. We also have some edges joining vertices on the opposite lines. The goal is to select a set of edges that do not cross one another subject to different objectives, e.g. maximum size, maximum weight, etc.

In this paper, we study a problem in geometric matching under preferences. In particular, we investigate the problem of having $n$ men and $n$ women represented by points lying on two parallel lines, with each line containing $n$ people of one gender. Each person has a list that ranks some people of opposite gender in strict order of preference. A \textit{noncrossing blocking pair} w.r.t. a matching $M$ is a blocking pair w.r.t. $M$ that does not cross any edge in $M$. Our goal is to find a noncrossing matching that does not admit any noncrossing blocking pair, called a \textit{weakly stable noncrossing matching} (WSNM).

Note that the real-world applications of this geometric problem are more likely to involve immovable objects, e.g. construction of noncrossing bridges between cities on the two sides of a river, with each city having different preferences. In this paper, however, we keep the terminologies of men and women used in the original stable marriage problem in order to understand and relate to the original problem more easily.

\subsection{Related Work}
The stable marriage problem was first introduced by Gale and Shapley \cite{gale}. They proved that a stable matching always exists in an instance with $n$ men and $n$ women, with each person's preference list containing all $n$ people of opposite gender and not containing ties. They also developed an $O(n^2)$ algorithm to find a stable matching in a given instance. Gusfield and Irving \cite{gusfield} later showed that the algorithm can be adapted to the setting where each person's preference list may not contain all people of opposite gender. The algorithm runs in $O(m)$ time in this setting, where $m$ is the total length of people's preference lists. Gale and Sotomayor \cite{gale2} proved that in this modified setting, a stable matching may have size less than $n$, but every stable matching must have equal size. Irving \cite{irving2} then generalized the notion of a stable matching to the case where ties are allowed in people's preference lists. He introduced three types of stable matchings in this setting: \textit{weakly stable}, \textit{super-stable}, and \textit{strongly stable}, as well as developing algorithms to determine whether each type of matching exists in a given instance and find one if it does.

The Stable Roommates problem is a generalization of the stable marriage problem to a non-bipartite setting where people can be matched regardless of gender. Unlike in the original problem, a stable matching in this setting does not always exist. Irving \cite{irving} developed an $O(n^2)$ algorithm to find a stable matching or report that none exists in a given instance, where $n$ is the number of people.

On the other hand, the noncrossing matching problem in a bipartite graph where the points lie on two parallel lines, each containing $n$ points, was encountered in many real-world situations such as in VLSI layout design \cite{kajitami}. In the special case where each point is adjacent to exactly one point on the opposite line, Fredman \cite{fredman} presented an $O(n \log n)$ algorithm to find a maximum size noncrossing matching by computing the length of the longest increasing subsequence (LIS). Widmayer and Wong \cite{widmayer} developed another algorithm that runs in $O(k+(n-k)\log(k+1))$ time, where $k$ is the size of the solution. This algorithm has the same worst-case runtime as Fredman's, but runs faster in most general cases.

In a more general case where each point may be adjacent to more than one point, Malucelli et al. \cite{malucelli} developed an algorithm to find a maximum size noncrossing matching. The algorithm runs in either $O(m \log\log n)$ or $O(m+\min{(nk,m \log k)})$ time depending on implementation, where $m$ is the number of edges and $k$ is the size of the solution. In the case where each edge has a weight, they also showed that the algorithm can be adapted to find a maximum weight noncrossing matching with $O(m \log n)$ runtime.

\subsection{Our Contribution}
In this paper, we constructively prove that a weakly stable noncrossing matching always exists in any instance by developing an $O(n^2)$ algorithm to find one in a given instance.

\section{Preliminaries}
In this setting, we have a set of $n$ men $m_1,...,m_n$ represented by points lying on a vertical line in this order from top to bottom, and a set of $n$ women $w_1,...,w_n$ represented by points lying on another parallel line in this order from top to bottom. Only people of opposite genders can be matched with each other, and each person can be matched with at most one other person. A \textit{matching} is a set of disjoint man-woman pairs.

For a person $a$ and a matching $M$, define $M(a)$ to be the person matched with $a$ (for convenience, let $M(a) = null$ for an unmatched person $a$). For each person $a$, let $P_a$ be the preference list of $a$ containing people of opposite gender to $a$ as his/her acceptable partners in decreasing order of preference. Note that a preference list does not have to contain all $n$ people of opposite gender. Throughout this paper, we assume that the preference lists are \textit{strict} (containing no tie involving two or more people). Also, let $r_a(b)$ be the rank of a person $b$ in $P_a$, with the first entry having rank 1, the second entry having rank 2, and so on (for convenience, let $r_a(null) = \infty$ and treat $null$ as the last entry appended to the end of $P_a$, as being matched is always better than being unmatched). A person $a$ is said to prefer a person $b$ to a person $c$ if $r_a(b) < r_a(c)$.

A pair of edges cross each other if they intersect in the interior of both segments. Formally, an edge $(m_i,w_x)$ crosses an edge $(m_j,w_y)$ if and only if $(i-j)(x-y) < 0$. A matching is called \textit{noncrossing} if it does not contain any pair of crossing edges.

The following are the formal definitions of a \textit{blocking pair} given in the original stable marriage problem, and a \textit{noncrossing blocking pair} introduced here.

\begin{definition}
A \textit{blocking pair} w.r.t. a matching $M$ is a pair $(m,w)$ of a man and a woman that are not matched with each other, but $m$ prefers $w$ to $M(m)$ and $w$ prefers $m$ to $M(w)$.
\end{definition}

\begin{definition}
A \textit{noncrossing blocking pair} w.r.t. a matching $M$ is a blocking pair w.r.t. $M$ that does not cross any edge in $M$. 
\end{definition}

We also introduce two types of stable noncrossing matchings, distinguished as weakly and strongly stable.

\begin{definition}
A matching $M$ is called a \textit{weakly stable noncrossing matching} (WSNM) if $M$ is noncrossing and does not admit any noncrossing blocking pair.  
\end{definition}

\begin{definition}
A matching $M$ is called a \textit{strongly stable noncrossing matching} (SSNM) if $M$ is noncrossing and does not admit any blocking pair.  
\end{definition}

Note that an SSNM is a matching that is both noncrossing and stable, while a WSNM is ``stable" in a weaker sense as it may admit a blocking pair, just not a noncrossing one.

An SSNM may not exist in some instances. For example, in an instance of two men and two women, with $P_{m_1} = (w_2,w_1), P_{m_2} = (w_1,w_2), P_{w_1} = (m_2,m_1)$, and $P_{w_2} = (m_1,m_2)$, the only stable matching is $\{ (m_1,w_2), (m_2,w_1) \}$, and its two edges do cross each other. On the other hand, the above instance has two WSNMs $\{ (m_1,w_2)\}$ and $\{(m_2,w_1) \}$. It also turns out that a WSNM always exists in every instance. Throughout this paper, we focus on the proof of existence of a WSNM by developing an algorithm to find one.

\section{Our Algorithm}
\subsection{Outline}
Without loss of generality, for each man $m$ and each woman $w$, we assume that $w$ is in $m$'s preference list if and only if $m$ is also in $w$'s preference list (otherwise we can simply remove the entries that are not mutual from the lists). Initially, every person is unmatched.

Our algorithm uses proposals from men to women similarly to the Gale– Shapley algorithm in \cite{gale}, but in a more constrained way. With $M$ being the current noncrossing matching, when a woman $w$ receives a proposal from a man $m$, if she prefers her current partner $M(w)$ to $m$, she rejects $m$; if she is currently unmatched or prefers $m$ to $M(w)$, she dumps $M(w)$ and accepts $m$.

Consider a man $m$ and a woman $w$ not matched with each other. An entry $w$ in $P_m$ has the following possible states:
\begin{enumerate}[label*=\arabic*.]
	\item \textbf{\textit{accessible}} (to $m$), if $(m,w)$ does not cross any edge in $M$;
	\begin{enumerate}[label*=\arabic*.]
		\item \textbf{\textbf{\textit{available}}} (to $m$), if $w$ is accessible to $m$, and is currently unmatched or matched with a man she likes less than $m$, i.e. $m$ is going to be accepted if he proposes to her (for convenience, if $w$ is currently matched with $m$, we also call $w$ accessible and available to $m$).
		\item \textbf{\textit{unavailable} }(to $m$), if $w$ is accessible to $m$, but is currently matched with a man she likes more than $m$, i.e. $m$ is going to be rejected if he proposes to her;
	\end{enumerate}
	\item \textbf{\textit{inaccessible}} (to $m$), if $w$ is not accessible to $m$;
\end{enumerate}

For a man $m$, if every entry in $P_m$ before $M(m)$ is either inaccessible or unavailable, then we say that $m$ is \textit{stable}; otherwise (there is at least one available entry before $M(m)$) we say that $m$ is \textit{unstable}.

The main idea of our algorithm is that, at any point, if there is at least one unstable man, we pick the topmost unstable man $m_i$ (the unstable man $m_i$ with least index $i$) and perform the following operations.
\begin{enumerate}
	\item Let $m_i$ \textit{dump} his current partner $M(m_i)$ (if any), i.e. remove $(m_i,M(m_i))$ from $M$, and let him propose to the available woman $w_j$ that he prefers most.
	\item Let $w_j$ \textit{dump} her current partner $M(w_i)$ (if any), i.e. remove $(M(w_j),w_j)$ from $M$, and let her accept $m_i$'s proposal.
	\item Add the new pair $(m_i,w_j)$ to $M$.
\end{enumerate}
We repeatedly perform such operations until every man becomes stable. Note that throughout the algorithm, every proposal will result in acceptance and $M$ will always be noncrossing since men propose only to women available to them.

\subsection{Proof of Correctness}
First, we will show that if our algorithm stops, then the matching $M$ given by the algorithm must be a WSNM.

Assume, for the sake of contradiction, that $M$ admits a noncrossing blocking pair $(m_i,w_j)$. That means $m_i$ prefers $w_j$ to his current partner $M(m_i)$, $w_j$ prefers $m_i$ to her current partner $M(w_j)$, and $(m_i,w_j)$ does not cross an edge in $M$, thus the entry $w_j$ in $P_{m_i}$ is available and is located before $M(m_i)$. However, by the description of our algorithm, the process stops when every man becomes stable, which means there cannot be an available entry before $M(m_i)$ in $P_{m_i}$, a contradiction. Therefore, we can conclude that our algorithm gives a WSNM as a result whenever it stops.

However, it is not trivial that our algorithm will eventually stop. In contrast to the Gale–Shapley algorithm in the original stable marriage problem, in this problem a woman is not guaranteed to get increasingly better partners throughout the process because a man can dump a woman too if he later finds a better available woman previously inaccessible to him (due to having an edge obstructing them). In fact, it is actually the case where the process may not stop if at each step we pick an arbitrary unstable man instead of the topmost one. For example, in an instance of two men and two women with $P_{m_1} = (w_2,w_1), P_{m_2} = (w_1,w_2), P_{w_1} = (m_1,m_2), P_{w_2} = (m_2,m_1)$, the order of picking $m_1, m_2, m_2, m_1, m_1, m_2, m_2, m_1, ...$ results in the process continuing forever, with the matching $M$ looping between $\{(m_1,w_2)\}$, $\{(m_2,w_2)\}$, $\{(m_2,w_1)\}$, and $\{(m_1,w_1)\}$ at each step.

We will prove that our algorithm must eventually stop and evaluate its worst-case runtime after we introduce the explicit implementation of the algorithm in the next subsection.

\subsection{Implementation}
To implement the above algorithm, we have to consider how to efficiently find the topmost unstable man at each step in order to perform the operations on him. Of course, a straightforward way to do this is to update the state of every entry in every man's preference list after each step, but that method will be very inefficient. Instead, we introduce the following scanning method.

Throughout the algorithm, we do not know exactly the set of all unstable men, but we instead keep a set $S$ of men that are ``possibly unstable." Initially, the set $S$ contains all men, i.e. $S = \{m_1,m_2,...,m_n\}$, and at each step we maintain the set $S$ of the form $\{m_i,m_{i+1},...,,m_n\}$ for some $i \in [n]$ (that means $m_1,m_2,...,m_{i-1}$ are guaranteed to be stable at that time). Note that in the actual implementation, we can store only the index of the topmost man in $S$ instead of the whole set. At each step, we scan the topmost man $m_i$ in $S$ and check whether $m_i$ is stable. If $m_i$ is already stable, then we simply skip him by removing $m_i$ from $S$ and moving to scan the next man in $S$. If $m_i$ is unstable, then $m_i$ is indeed the topmost unstable man we want, so we perform the operations on $m_i$. Note that the operations may cause some men to become unstable, so after that we have to add all men that are possibly affected by the operations back to $S$. The details of the scanning and updating processes are as follows.

During the scan of $m_i$, let $m_{prev}$ be the \textit{matched} man closest to $m_i$ that lies above him, and let $w_{first} = M(m_{prev})$ (we let $w_{first} = w_1$ if there is no $m_{prev}$). Also, let $m_{next}$ be the \textit{matched} man closest to $m_i$ that lies below him, and let $w_{last} = M(m_{next})$ (we let $w_{last} = w_n$ if there is no $m_{next}$). Observe that matching $m_i$ with anyone lying above $w_{first}$ will cross the edge $(m_{prev},w_{first})$, and matching $m_i$ with anyone lying below $w_{last}$ will cross the edge $(m_{next},w_{last})$. Therefore, the range of all women accessible to $m_i$ ranges exactly from $w_{first}$ to $w_{last}$, hence the range of all women available to $m_i$ ranges from either $w_{first}$ or $w_{first+1}$ (depending on whether $w_{first}$ prefers $m_i$ to $m_{prev}$) to either $w_{last}$ or $w_{last-1}$ (depending on whether $w_{last}$ prefers $m_i$ to $m_{next}$). See Fig. 1.

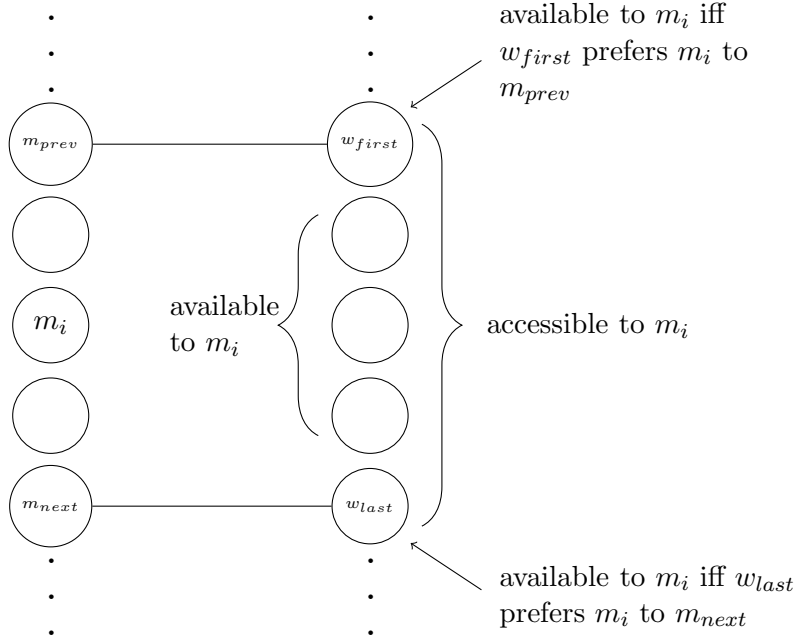
\begin{figure}
\begin{minipage}{0.5\textwidth}
\begin{tikzpicture}[node distance=1.2cm, auto]
		\node [] (a0) {};
    \node [vertex, right of=a0, xshift=2cm] (a1) {\tiny $m_{prev}$};
    \node [vertex, below of=a1] (a2) {};
    \node [vertex, below of=a2] (a3) {$m_i$};
    \node [vertex, below of=a3] (a4) {};
    \node [vertex, below of=a4] (a5) {\tiny $m_{next}$};
    \node [above of=a1, text width=0.1cm] {\textbf{. . .}};
    \node [below of=a5, text width=0.1cm] {\textbf{. . .}};
    \node [vertex, right of=a1, xshift=3cm] (b1) {\tiny $w_{first}$};
    \node [vertex, below of=b1] (b2) {};
    \node [vertex, below of=b2] (b3) {};
    \node [vertex, below of=b3] (b4) {};
    \node [vertex, below of=b4] (b5) {\tiny $w_{last}$};
    \node [above of=b1, text width=0.1cm] {\textbf{. . .}};
    \node [below of=b5, text width=0.1cm] {\textbf{. . .}};
    
    \node [right of=b1, xshift=-0.8cm, yshift=0.4cm] (c1) {};
    \node [right of=c1, yshift=0.8cm] (d1) {};
    \node [right of=d1, text width=4cm, xshift=0.9cm] {available to $m_i$ iff $w_{first}$ prefers $m_i$ to $m_{prev}$};
    \node [right of=b5, xshift=-0.8cm, yshift=-0.4cm] (c5) {};
    \node [right of=c5, yshift=-0.8cm] (d5) {};
    \node [right of=d5, text width=4cm, xshift=0.9cm] {available to $m_i$ iff $w_{last}$ prefers $m_i$ to $m_{next}$};
    \node [left of=b2, xshift=0.8cm, yshift=0.4cm] (c2) {};
    \node [left of=b4, xshift=0.8cm, yshift=-0.4cm] (c4) {};
    
    \draw [decorate,decoration={brace,amplitude=15pt,raise=8pt}] (c1) -- (c5) node[midway,xshift=1cm]{accessible to $m_i$};
    \draw [decorate,decoration={brace,amplitude=15pt,raise=8pt}] (c4) -- (c2) node[midway,xshift=-0.6cm,text width=1.5cm]{available to $m_i$};
    
    \path [line] (a1) -- node{} (b1);
    \path [line] (a5) -- node{} (b5);
    \path [arrow] (d1) -- node{} (c1);
    \path [arrow] (d5) -- node{} (c5);
\end{tikzpicture}
\end{minipage}
\caption{Accessible and available women to $m_i$}
\end{figure}

Then in the available range, $m_i$ selects the woman $w_j$ that he prefers most.

\textbf{Case 1:} $w_j$ does not exist or $m_i$ is currently matched with $w_j$.

That means $m_i$ is currently stable, so we can skip him. We remove $m_i$ from $S$ and proceed to scan $m_{i+1}$ in the next step (called a \textit{downward jump}).

\textbf{Case 2:} $w_j$ exists and $m_i$ is not currently matched with $w_j$

That means $m_i$ is indeed the topmost unstable man we want, so we perform the operations on him by letting $m_i$ propose to $w_j$ and dump his current partner (if any).

\textbf{Case 2.1:} $m_{prev}$ exists and $w_j = w_{first}$.

That means $w_{first}$ dumps $m_{prev}$ to get matched with $m_i$, which leaves $m_{prev}$ unmatched and he may possibly become unstable. Furthermore, $m_{prev+1},$ $m_{prev+2}, ..., m_{i-1}$ as well as $m_i$ himself may also possibly become unstable since they now gain access to women lying above $w_{first}$ previously inaccessible to them (if $w_{first} \neq w_1$). On the other hand, $m_1, m_2, ..., m_{prev-1}$ clearly remain stable, hence we add $m_{prev}, m_{prev+1}, ..., m_{i-1}$ to $S$ and proceed to scan $m_{prev}$ in the next step (called an \textit{upward jump}).

\textbf{Case 2.2:} $m_{prev}$ does not exist or $w_j \neq w_{first}$.

\textbf{Case 2.2.1:} $m_i$ is currently matched and $w_j$ lies geometrically below $M(m_i)$.

Then, $m_{prev}, m_{prev+1}, ..., m_{i-1}$ (or $m_1,m_2,...m_{i-1}$ if $m_{prev}$ does not exist) may possibly becomes unstable since they now gain access to women between $M(m_i)$ and $w_j$ previously inaccessible to them. Therefore, we perform the upward jump to $m_{prev}$ (or to $m_1$ if $m_{prev}$ does not exist), adding $m_{prev}, m_{prev+1}, ...,$ $m_{i-1}$ (or $m_1,m_2,...m_{i-1}$) to $S$ and proceed to scan $m_{prev}$ (or $m_1$) in the next step, except when $m_i = m_1$ that we perform the downward jump to $m_2$.

It turns out that this case is impossible, which we will prove in the next subsection.

\textbf{Case 2.2.2:} $m_i$ is currently unmatched or $w_j$ lies geometrically above $M(m_i)$.

Then all men lying above $m_i$ clearly remain stable (because the sets of available women to $m_1,...,m_{i-1}$ either remain the same or become smaller). Also, $m_i$ now becomes stable as well (because $m_i$ selects a woman he prefers most in the available range), except in the case where $w_j = w_{last}$ (because the edge $(m_{next},w_{last})$ is removed and $m_i$ now has access to women lying below $w_{last}$ previously inaccessible to him). Therefore, we perform the downward jump, removing $m_i$ from $S$ and moving to scan $m_{i+1}$ in the next step, except when $w_j = w_{last}$ that we have to scan $m_i$ again in the next step (this exception, however, turns out to be impossible, which we will prove in the next subsection).

We scan the men in this way until $S$ becomes empty (see Example 1). By the way we add all men that may possibly become unstable after each step back to $S$, at any step $S$ is guaranteed to contain the topmost unstable man.

\begin{example}
Consider an instance of three men and three women with the following preference lists.

\begin{figure}[H]
	\centering
	\begin{minipage}{0.3\textwidth}
		$\boldsymbol{m_1:} \hspace{0.1cm} w_3, w_1, w_2$ \\
		$\boldsymbol{m_2:} \hspace{0.1cm} w_2, w_3, w_1$ \\
		$\boldsymbol{m_3:} \hspace{0.1cm} w_2, w_1, w_3$ \\
	\end{minipage}
	\begin{minipage}{0.3\textwidth}
		$\boldsymbol{w_1:} \hspace{0.1cm} m_3, m_2, m_1$ \\
		$\boldsymbol{w_2:} \hspace{0.1cm} m_3, m_2, m_1$ \\
		$\boldsymbol{w_3:} \hspace{0.1cm} m_3, m_2, m_1$ \\
	\end{minipage}
\end{figure}
		
Our algorithm will scan the men in the following order and output a matching $M = \{(m_2,w_1),(m_3,w_2)\}$, which is a WSNM.

\begin{center}
\begin{table}[H]
	\begin{tabular}{|c|l|l|l|}
		\hline
		\textbf{\thead{Step}} & \textbf{\thead{Process}} & \textbf{\thead{$M$ at the end\\ of step}} & \textbf{\thead{$S$ at the end\\ of step}} \\ \hline
		0 & & $\O$ & $\{m_1,m_2,m_3\}$ \\ \hline
		1 & scan $m_1$, add $(m_1,w_3)$ & $\{(m_1,w_3)\}$ & $\{m_2,m_3\}$ \\ \hline
		2 & scan $m_2$, add $(m_2,w_3)$, remove $(m_1,w_3)$ & $\{(m_2,w_3)\}$ & $\{m_1,m_2,m_3\}$ \\ \hline
		3 & scan $m_1$, add $(m_1,w_1)$ & $\{(m_1,w_1),(m_2,w_3)\}$ & $\{m_2,m_3\}$ \\ \hline
		4 & scan $m_2$, add $(m_2,w_2)$, remove $(m_2,w_3)$ & $\{(m_1,w_1),(m_2,w_2)\}$ & $\{m_3\}$ \\ \hline
		5 & scan $m_3$, add $(m_3,w_2)$, remove $(m_2,w_2)$ & $\{(m_1,w_1),(m_3,w_2)\}$ & $\{m_2,m_3\}$ \\ \hline
		6 & scan $m_2$, add $(m_2,w_1)$, remove $(m_1,w_1)$ & $\{(m_2,w_1),(m_3,w_2)\}$ & $\{m_1,m_2,m_3\}$ \\ \hline
		7 & scan $m_1$ & $\{(m_2,w_1),(m_3,w_2)\}$ & $\{m_2,m_3\}$ \\ \hline
		8 & scan $m_2$ & $\{(m_2,w_1),(m_3,w_2)\}$ & $\{m_3\}$ \\ \hline
		9 & scan $m_3$ & $\{(m_2,w_1),(m_3,w_2)\}$ & $\O$ \\ \hline
	\end{tabular}
\end{table}
\end{center}
\end{example}

\subsection{Observations}
First, we will prove the following lemma about the algorithm described in the previous subsection.

\begin{lemma} \label{lem00}
During the scan of a man $m_i$, if $m_i$ is currently matched, then $m_i$ does not propose to any woman lying geometrically below $M(m_i)$.
\end{lemma}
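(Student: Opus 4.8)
The plan is to induct on the steps of the algorithm, taking as inductive hypothesis that Lemma~\ref{lem00} holds at every \emph{earlier} step; equivalently, that while a man stays matched his partner only ever moves geometrically upward. This reformulation is convenient because it also pins down how partners change: a woman is dumped only when her partner proposes to someone he prefers, which by hypothesis is a woman above her, while a woman only ever accepts a man she prefers, so her own partner only improves. Now fix the current step, at which a matched man $m_i$ with partner $w^\ast=M(m_i)$ is scanned, let $w_j$ be the most preferred woman available to him, and suppose for contradiction that $w_j$ lies strictly below $w^\ast$. Since $w^\ast$ is available to $m_i$ by convention, $w_j$ being his top choice with $w_j\ne w^\ast$ forces $m_i$ to prefer $w_j$ to $w^\ast$, and $w_j$ is both accessible and available. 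Let $t_0$ be the most recent step at which $m_i$ proposed; then $m_i$ is matched to $w^\ast$ throughout $[t_0,\text{now}]$ (he was not dumped, else he would be unmatched now, and he did not propose again, else $t_0$ would not be the latest such step), and at $t_0$ the woman $w^\ast$ was his most preferred available woman, so $w_j$, being preferred to $w^\ast$, was at that step either inaccessible or unavailable to him.

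The first structural fact I would record is that the status of any woman below $w^\ast$ is governed entirely by the men below $m_i$: because $M$ is noncrossing and $m_i$ holds $w^\ast$, every matched man above $m_i$ holds a woman above $w^\ast$, whose edge cannot cross a downward segment $(m_i,w)$ with $w$ below $w^\ast$, while every matched man below $m_i$ holds a woman below $w^\ast$. In particular $w_j$ is accessible to $m_i$ exactly when no man below $m_i$ is matched into the open interval $I=(w_j,w^\ast)$. I would then prove the crux of the argument: the predicate \emph{``$w_j$ is inaccessible or unavailable to $m_i$''} persists from $t_0$ to the present step. Since it holds at $t_0$, it therefore holds now, contradicting the fact that $m_i$ proposes to $w_j$.

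The engine of this persistence claim is a ceiling argument. While $m_i$ sits at $w^\ast$, no man below $m_i$ can reach $w^\ast$ (that would dump $m_i$) nor cross above it, so by the inductive hypothesis each such man, whenever he moves, proposes upward but stays strictly below $w^\ast$. Hence once some man below $m_i$ occupies $I$, the interval stays occupied: an occupant who moves up remains inside $I$, and an occupant who is dumped is replaced by the man who displaced his woman, who is again a man below $m_i$; thus inaccessibility of $w_j$, once present, is never lost. Symmetrically, if at some step $w_j$ is accessible but unavailable, say matched to a man $m'$ below $m_i$ whom she prefers to $m_i$, then her partner can only improve (keeping her unavailable) unless $m'$ proposes away, and that proposal is upward and hence into $I$, which instantly renders $w_j$ inaccessible. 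Consequently each single step preserves the disjunction ``inaccessible or unavailable,'' and the only remaining events, caused by scanning a man above $m_i$, leave $I$ and $w_j$ untouched, since such a man can only touch women above $w^\ast$.

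I expect the delicate point to be the exhaustive bookkeeping in the persistence step: checking that \emph{every} event that might vacate $I$ or free $w_j$ (an occupant proposing upward, an occupant being dumped and replaced, a previously unmatched man below $m_i$ proposing in, or a man above $m_i$ acting) either re-occupies $I$ or keeps $w_j$ unavailable, using only the inductive hypothesis, the noncrossing property, and the fact that $m_i$ is not dumped during $[t_0,\text{now}]$. This is exactly where the monotone ``partners move up'' behaviour and the geometric ceiling at $w^\ast$ combine to do the work, and making the case enumeration airtight is the main labour of the proof; it is also the place where the discipline of always scanning the topmost unstable man is implicitly felt, since it is what keeps the men above $m_i$ from interfering below $w^\ast$.
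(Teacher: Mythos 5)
Your proof is correct and is essentially the paper's argument in different packaging: you take the earliest violation (via induction on steps, where the paper takes the first ``downward switch''), return to the most recent proposal time $t_0$ to obtain the witness, and propagate it forward using the no-earlier-downward-switch hypothesis and the fixed ceiling edge $(m_i,w^\ast)$. Your persistence rules for the invariant ``some man below $m_i$ occupies $I$, or $w_j$ is unavailable'' --- follow the dumping man to his new (higher) woman, or follow the dumped woman to her new (lower) man --- are exactly the paper's ``dynamic edge'' update, whose woman-index never increases and whose man-index stays below $m_i$'s line.
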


\begin{proof}
We call a situation when a man $m_i$ proposes to a woman lying geometrically below $M(m_i)$ a \textit{downward switch}. Assume, for the sake of contradiction, that a downward switch occurs at least once during the whole algorithm. Suppose that the first downward switch occurs at step $s$, when a man $m_i$ is matched to $w_k = M(m_i)$ and proposes to $w_j$ with $j > k$. We have $m_i$ prefers $w_j$ to $w_k$.

Consider the step $t < s$ when $m_i$ proposed to $w_k$ (if $m_i$ proposed to $w_k$ multiple times, consider the most recent one). At step $t$, $w_j$ must be inaccessible or unavailable to $m_i$ (otherwise he would choose $w_j$ instead of $w_k$), meaning that there must be an edge $(m_p,w_q)$ with $p > i$ and $k < q < j$ obstructing them in the inaccessible case, or an edge $(m_p,w_q)$ with $p > i$, $q = j$, and $w_j$ preferring $m_p$ to $m_i$ in the unavailable case.

We define a \textit{dynamic edge} $e$ as follows. First, at step $t$ we set $e = (m_p,w_q)$. Then, throughout the process we update $e$ by the following method: whenever the endpoints of $e$ cease to be partners of each other, we update $e$ to be the edge joining the endpoint that dumps his/her partner with his/her new partner. Formally, suppose that $e$ is currently $(m_x,w_y)$. If $m_x$ dumps $w_y$ to get matched with $w_{y'}$, we update $e$ to be $(m_x,w_{y'})$; if $w_y$ dumps $m_x$ to get matched with $m_{x'}$, we update $e$ to be $(m_{x'},w_y)$.

By this updating method, the edge $e$ will always exist after step $t$, but may change over time. Observe that from step $t$ to step $s$, we always have $x > i$ because of the existence of $(m_i,w_k)$. Moreover, before step $s$, if $m_x$ dumps $w_y$ to get matched with $w_{y'}$, by the assumption that a downward switch did not occur before step $s$, we have $y' < y$, which means the index of the women's side of $e$'s endpoints never increases. Consider the edge $e=(m_x,w_y)$ at step $s$, we must have $x > i$ and $y \leq q \leq j$. If $y < j$, then the edge $e$ obstructs $m_i$ and $w_j$, making $w_j$ inaccessible to $m_i$. If $y = j$, that means $w_j$ never got dumped since step $t$, so she got only increasingly better partners, thus $w_j$ prefers $m_x$ to $m_i$, making $w_j$ unavailable to $m_i$. Therefore, in both cases $m_i$ could not propose to $w_j$, a contradiction. Hence, a downward switch cannot occur in our algorithm.
\end{proof}

Lemma \ref{lem00} shows that a woman cannot get her partner stolen by any woman that lies below her, which is equivalent to the following corollary.

\begin{corollary} \label{cor00}
If a man $m_i$ dumps a woman $w_j$ to propose to a woman $w_k$, then $k < j$.
\end{corollary}

It also implies that Case 2.2.1 in the previous subsection never occurs. Therefore, the only case where an upward jump occurs is Case 2.1 ($m_{prev}$ exists and $m_i$ proposes to $w_{first}$). We will now prove the following lemma.

\begin{lemma} \label{lem0}
During the scan of a man $m_i$, if $m_{next}$ exists, then $m_i$ does not propose to $w_{last}$.
\end{lemma}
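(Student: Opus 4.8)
The plan is to separate the scan of $m_i$ into the cases where $m_i$ is matched and where $m_i$ is unmatched, settle the matched case directly with Lemma~\ref{lem00}, and treat the unmatched case by strengthening the statement into an invariant maintained throughout the run. The matched case is immediate: if $m_i$ is matched and $m_{next}$ exists, the edges $(m_i,M(m_i))$ and $(m_{next},w_{last})$ both lie in the noncrossing matching with $i<next$, so the noncrossing condition forces $M(m_i)$ to lie strictly above $w_{last}$. Hence $w_{last}$ lies geometrically below $M(m_i)$, and Lemma~\ref{lem00} forbids $m_i$ from proposing below his own partner, so he cannot propose to $w_{last}$.

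For the unmatched case I would prove, by induction on the steps, the following invariant: \emph{at every moment, for each unmatched man $m_a$ that has a matched man below him, the woman $M(m_{next})$ at the bottom of $m_a$'s accessible range is unavailable to $m_a$, i.e.\ prefers $m_{next}$ to $m_a$.} The invariant instantly gives the lemma when $m_i$ is unmatched, since then $w_{last}=M(m_{next})$ is unavailable and $m_i$ will not propose to her. The base observation driving the induction is that an unmatched \emph{stable} man automatically satisfies the invariant: stability of an unmatched man means every entry of his list is inaccessible or unavailable, and $w_{last}$ is, by the range description, an accessible entry, hence unavailable. Consequently the invariant can be threatened only when an operation changes the identity of the bottom edge $(m_{next},w_{last})$ of some unmatched man.

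Within a step that scans $m_c$, I would first establish the lemma's conclusion for $m_c$ (by the matched case above if $m_c$ is matched, or by the invariant if $m_c$ is unmatched), and then use it to preserve the invariant. Three scenarios arise for an unmatched man $m_a$. If the proposer lies strictly above $m_a$, then—using the just-established conclusion for $m_c$ together with Corollary~\ref{cor00}—every man it dumps lies above the proposer, hence above $m_a$, so the entire matched configuration below $m_a$ is untouched and the invariant is inherited. If $m_c=m_a$, the step either matches him or leaves him stable-unmatched, both fine. If the proposer lies below $m_a$, this can happen only while $m_a$ is stable, so the base observation applies an instant before the change; the new bottom woman was then accessible to $m_a$, hence unavailable to him, and since a woman only improves her partner when she switches, preference transitivity shows she still prefers her new partner—the new $m_{next}$—to $m_a$.

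The main obstacle is this last bookkeeping: checking across the algorithm's jump types (the Case~2.1 upward jump and the Case~2.2.2 move, the latter possibly dumping $m_i$'s own old partner) that whenever the bottom edge of an unmatched man changes, the newly matched woman there really was accessible to him one moment earlier, so that the base observation and the monotone improvement of her partner can be chained. This rests on a short geometric step—using the noncrossing condition to rule out any matched man strictly between $m_a$ and the new $m_{next}$, and Lemma~\ref{lem00}/Corollary~\ref{cor00} to place each dumped man above the relevant proposer—after which every case collapses to the same two ingredients and the invariant is preserved.
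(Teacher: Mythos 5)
Your proof is correct in outline but takes a genuinely different route from the paper. The paper proves Lemma~\ref{lem0} by a backward-looking contradiction: assuming $m_i$ proposes to $w_{last}$, it inspects the immediately preceding scan and, in the harder case, the most recent upward jump over $m_i$ together with the last scan of $m_i$ before that jump, using Corollary~\ref{cor00} to force the jump source to be $m_{next}$ and the stolen woman to be $w_{last}$, then chaining $w_{last}$'s preferences ($m_{next}$ over $m_j$ over $m_i$) to conclude she is unavailable. You instead argue forward, strengthening the statement into a global invariant --- the bottom accessible woman $M(m_{next})$ is never \emph{available} to any unmatched man $m_a$ --- and maintaining it by a simultaneous induction in which the lemma's conclusion for the currently scanned man is derived first (ruling out $w_j=w_{last}$) and then used to preserve the invariant; your matched case is dispatched immediately, since noncrossing of $(m_i,M(m_i))$ and $(m_{next},w_{last})$ puts $w_{last}$ below $M(m_i)$ and Lemma~\ref{lem00} forbids the proposal, a reduction the paper leaves implicit inside its contradiction frame. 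Both arguments ultimately rest on the same three ingredients --- men above the scan pointer are stable, women improve monotonically between dumps so transitivity applies, and Corollary~\ref{cor00} locates dumped men above the proposer --- but your invariant is more systematic and reusable, at the cost of the case bookkeeping you honestly flag, while the paper's argument is shorter but more delicate (it needs, e.g., that a scan of $m_i$ occurred before step $t$ and that $m_j$ already held $w_{last}$ at that scan). Two small patches are needed to make your induction airtight, though neither is a real gap: first, state the invariant as ``$w_{last}$ is \emph{not available} to $m_a$'' rather than ``$w_{last}$ prefers $m_{next}$ to $m_a$,'' since availability is defined only for entries of $P_{m_a}$ and the new bottom woman may be mutually unranked with $m_a$ (your base observation, via stability of the unmatched $m_a$, only yields inaccessible-or-unavailable for women actually on his list); second, your third scenario's appeal to the base observation ``an instant before'' fails for the newly dumped $m_{prev}$, who was matched an instant before --- but his case is immediate, since his new $w_{last}$ is exactly the woman who just dumped him for his new $m_{next}$, so she prefers $m_{next}$ to him directly, with no transitivity needed.
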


\begin{proof}
Assume, for the sake of contradiction, that $m_i$ proposes to $w_{last}$. Since $m_{next}$ exists, this proposal obviously cannot occur in the very first step of the algorithm. Consider a man $m_k$ we scanned in the previous step right before scanning $m_i$.

\textbf{Case 1:} $m_k$ lies below $m_i$, i.e. $k>i$.

In order for the upward jump from $m_k$ to $m_i$ to occur, $m_i$ must have been matched with a woman but got her stolen by $m_k$ in the previous step. However, $m_{i+1}, m_{i+2}, ..., m_{next-1}$ are all currently unmatched (by the definition of $m_{next}$), so the only possibility is that $m_k = m_{next}$, and thus his partner that got stolen was $w_{last}$. Therefore, we can conclude that $w_{last}$ prefers $m_{next}$ to $m_i$, which means $w_{last}$ is currently unavailable to $m_i$, a contradiction.

\textbf{Case 2:} $m_k$ lies above $m_i$, i.e. $k<i$.

The jump before the current step was a downward jump, but since $m_{next}$ has been scanned before, an upward jump over $m_i$ must have occurred at some point before the current step. Consider the most recent upward jump over $m_i$ before the current step. Suppose than it occurred at the end of step $t$ and was a jump from $m_{k'}$ to $m_j$, with $k'>i$ and $j<i$. In order for this jump to occur, $m_j$ must have been matched with a woman but got her stolen by $m_{k'}$ at step $t$. However, $m_{i+1}, m_{i+2}, ..., m_{next-1}$ are all currently unmatched (by the definition of $m_{next}$), so the only possibility is that $m_{k'} = m_{next}$, and thus $m_j$'s partner that got stolen was $w_{last}$. We also have $m_{j+1}, m_{j+2}, ..., m_{next-1}$ were all unmatched during step $t$ (otherwise $w_{last}$ would be inaccessible to $m_{k'}$ ), and $w_{last}$ prefers $m_{next}$ to $m_j$.

Now, consider the most recent step before step $t$ in which we scanned $m_i$. Suppose it occurred at step $s$. During step $s$, $m_j$ was matched with $w_{last}$ and $w_{last}$ was accessible to $m_i$. However, $m_i$ was still left unmatched after step $s$ (otherwise an upward jump over $m_i$ at step $t$ could not occur), meaning that $w_{last}$ must be unavailable to him back then due to $w_{last}$ preferring $m_j$ to $m_i$. Therefore, we can conclude that $w_{last}$ prefers $m_{next}$ to $m_i$, thus $w_{last}$ is currently unavailable to $m_i$, a contradiction.
\end{proof}

Lemma \ref{lem0} shows that a man cannot get his partner stolen by any man lying above him, or equivalent to the following corollary.

\begin{corollary} \label{cor0}
If a woman $w_j$ dumps a man $m_i$ to accept a man $m_k$, then $k > i$.
\end{corollary}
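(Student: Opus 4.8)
The plan is to derive Corollary~\ref{cor0} directly from Lemma~\ref{lem0} by recasting the dumping event as a proposal. A woman $w_j$ dumps a man $m_i$ to accept $m_k$ precisely when $m_k$, during his own scan, proposes to $w_j = M(m_i)$ and $w_j$ prefers $m_k$ to $m_i$. Assume, for the sake of contradiction, that $k < i$, so that $m_i$ lies geometrically below $m_k$. I would then show that this configuration forces $m_k$ to propose to $w_{last}$ at a moment when $m_{next}$ exists, contradicting Lemma~\ref{lem0}.

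The key step is to pin down both endpoints of the proposal in terms of the quantities from the scan of $m_k$. Since $m_i$ is currently matched (to $w_j$) and lies below $m_k$, there is at least one matched man below $m_k$, so $m_{next}$ exists, $w_{last} = M(m_{next})$ is well-defined, and $m_{next}$ lies at or above $m_i$, i.e. $next \le i$. For $m_k$ to propose to $w_j$, the woman $w_j$ must be accessible to him, and the accessible range runs exactly from $w_{first}$ down to $w_{last}$; hence $w_j$ lies at or above $w_{last}$, i.e. $j \le last$. On the other hand, the current matching $M$ is noncrossing, so the edges $(m_{next},w_{last})$ and $(m_i,w_j)$ with $next \le i$ cannot cross, which forces $w_j$ to lie at or below $w_{last}$, i.e. $j \ge last$. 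Combining the two inequalities yields $j = last$, hence $w_j = w_{last}$ and $m_i = m_{next}$.

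It then follows that $m_k$ proposes to $w_{last}$ at a moment when $m_{next}$ exists, which is exactly the situation ruled out by Lemma~\ref{lem0}. This contradiction shows that the assumption $k < i$ is untenable, and since $m_k \neq m_i$ (the woman dumps one to accept the other), we conclude $k > i$, as claimed. I expect the main obstacle to be the second paragraph: one must argue carefully, combining the accessibility range from the scan with the noncrossing property of $M$, that the proposed woman $w_j$ is forced to coincide with $w_{last}$ and her dumped partner $m_i$ with $m_{next}$. Once this identification is secured, the reduction to Lemma~\ref{lem0} is immediate.
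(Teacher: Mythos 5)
Your proof is correct and follows the paper's intended route: the paper states Corollary~\ref{cor0} as an immediate restatement of Lemma~\ref{lem0} (``a man cannot get his partner stolen by any man lying above him'') without writing out the equivalence, and you derive it from Lemma~\ref{lem0} in exactly that spirit. Your second paragraph --- using the accessibility range to get $j \le last$ and the noncrossing property of the edges $(m_{next},w_{last})$ and $(m_i,w_j)$ to get $j \ge last$, forcing $w_j = w_{last}$ and $m_i = m_{next}$ --- correctly supplies the verification the paper leaves implicit.
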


\subsection{Proof of Finiteness}

Now, we will show that the position of each woman's partner can only move downward throughout the process, which guarantees the finiteness of the number of steps in the entire process.

\begin{lemma} \label{lem2}
After a woman $w_j$ ceases to be a partner of a man $m_i$, she cannot be matched with any man $m_{i'}$ with $i' \leq i$ afterwards.
\end{lemma}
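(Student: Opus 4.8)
The plan is to argue by contradiction via a minimal counterexample. Suppose the statement fails, and let step $s^*$ be the \emph{first} step in the entire process at which it is violated: a woman $w_j$ becomes matched with a man $m_{i'}$ even though she had earlier been matched with some $m_i$ with $i \ge i'$ and had since ceased to be his partner. Because $s^*$ is the first such step, the indices of $w_j$'s successive partners are strictly increasing up to $s^*$; in particular, the partner $m_{i_0}$ she held immediately before $s^*$ satisfies $i_0 \ge i \ge i'$.

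First I would dispose of the easy subcase. If $w_j$ were matched just before $s^*$ and acquired $m_{i'}$ by dumping her current partner---necessarily $m_{i_0}$---then Corollary~\ref{cor0} would force $i' > i_0$, contradicting $i' \le i_0$. Hence $w_j$ must be \emph{unmatched} immediately before $s^*$. Tracing backwards, a woman loses a partner only by being dumped, so $m_{i_0}$ dumped $w_j$ at some earlier step and she remained unmatched until $s^*$; by Corollary~\ref{cor00}, at that dumping step $m_{i_0}$ switched to some $w_{j'}$ with $j' < j$, i.e. strictly above $w_j$.

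The heart of the argument is to show that, as long as $w_j$ stays unmatched, no man of index $\le i_0$ can reach her. For this I would reuse the \emph{dynamic edge} device from the proof of Lemma~\ref{lem00}: initialize $e = (m_{i_0}, w_{j'})$ at the moment $m_{i_0}$ dumps $w_j$, and update $e$ by following whichever endpoint dumps its partner to that endpoint's new match. By Corollary~\ref{cor00} a man-initiated update strictly decreases the woman-index of $e$ while fixing its man-index, and by Corollary~\ref{cor0} a woman-initiated update strictly increases the man-index while fixing the woman-index; in both cases the dumping endpoint immediately acquires a new partner, so $e$ is a genuine edge of $M$ at every intermediate step. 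Consequently the woman-index of $e$ remains $< j$ and its man-index remains $\ge i_0 \ge i'$ throughout. Writing $e = (m_X, w_Y)$ just before step $s^*$, we therefore have $Y < j$ and $X \ge i'$.

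Finally I would extract the contradiction at step $s^*$, when $m_{i'}$ proposes to $w_j$. If $i' < X$, then $(i'-X)(j-Y) < 0$, so the segment $(m_{i'},w_j)$ crosses $e=(m_X,w_Y)$ and $w_j$ is inaccessible to $m_{i'}$, making the proposal impossible. If instead $i' = X$, then $m_{i'}$ is the endpoint $m_X$ currently matched with $w_Y$, which lies strictly above $w_j$; by Lemma~\ref{lem00} a matched man never proposes below his own partner, so again he cannot propose to $w_j$. Either way the proposal at $s^*$ cannot happen, contradicting the choice of $s^*$. I expect the main obstacle to be precisely this unmatched subcase: establishing the persistence and monotonicity of the dynamic edge, so that some edge incident to a woman above $w_j$ continuously shields her from every man of index at most $i_0$. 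The directly-dumped subcase, by contrast, is immediate from Corollary~\ref{cor0}.
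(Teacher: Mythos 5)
Your proof is correct, and it rests on the same core machinery as the paper's: Corollary~\ref{cor0} disposes of the case where $w_j$ dumps her partner at the violating step, and a dynamic edge whose man-index never decreases and whose woman-index never increases shields $w_j$ from every man of index at most $i_0$. Still, your organization differs in two ways worth noting. First, the paper fixes the pair $(m_i,w_j)$ and shows only that $w_j$'s \emph{next} partner $m_a$ satisfies $a>i$, leaving the extension to all later partners as an implicit induction; your first-violation framing (with the strictly increasing partner indices of $w_j$ up to $s^*$) makes that induction explicit, which is slightly more careful. Second, the paper splits the case ``$m_i$ dumps $w_j$'' into two subcases: Case 2.1, where $m_i$ is never dumped and his own edge $(m_i,M(m_i))$, moving only upward, is the obstruction; and Case 2.2, where the dynamic edge is seeded at $(m_x,w_y)$ only at the step when $m_i$ first gets dumped. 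You instead seed the dynamic edge at $(m_{i_0},w_{j'})$ at the very moment of the dump, so a single invariant subsumes both of the paper's subcases; your boundary case $i'=X$, handled via Lemma~\ref{lem00}, is exactly the paper's Case 2.1 observation in disguise. This is a mild but genuine streamlining. One wording nit: ``a woman loses a partner only by being dumped'' is literally false (she may dump him), but since a woman who dumps is rematched in the same step, your conclusion that the unmatched $w_j$ must have been dumped by her most recent partner $m_{i_0}$ does stand.
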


\begin{proof}
Suppose that $w_j$'s next partner (if any) is $m_a$. It is sufficient to prove that $a > i$. First, consider the situation when $m_i$ and $w_j$ cease to be partners.

\textbf{Case 1:} $w_j$ dumps $m_i$.

This means $w_j$ dumps $m_i$ to get matched with $m_a$ right away. By Corollary \ref{cor0}, we have $a > i$ as required.

\textbf{Case 2:} $m_i$ dumps $w_j$.

Suppose that $m_i$ dumps $w_j$ to get matched with $w_k$. By Corollary \ref{cor00}, we have $k < j$.

\textbf{Case 2.1:} $m_i$ never gets dumped afterwards.

That means $m_i$ will only get increasingly better partner, and the position of his partner can only move upwards (by Corollary \ref{cor00}), which means $w_j$ cannot be matched with $m_i$ again, or any man lying above $m_i$ afterwards due to having an edge $(m_i,M(m_i))$ obstructing. Therefore, $m_a$ must lie below $m_i$, i.e. $a > i$.

\textbf{Case 2.2:} $m_i$ gets dumped afterwards.

Suppose that $m_i$ first gets dumped by $w_y$ at step $s$. By Corollary \ref{cor00}, we have $y \leq k < j$ (because $m_i$ only gets increasingly better partners before getting dumped). Also suppose that $w_y$ dumps $m_i$ in order to get matched with $m_x$. By Corollary \ref{cor0}, we have $x > i$. Similarly to the proof of Lemma \ref{cor00}, consider a dynamic $e$ first set to be $(m_x,w_y)$ at step $s$. We have the index of the men's side of $e$'s endpoints never decreases, and that of the women's side never increases. Therefore, since step $s$, there always exists an edge $(m_x,w_y)$ with $x>i$ and $y<j$, obstructing $w_j$'s access to $m_i$ and all men lying above him. Therefore, $m_a$ must lie below $m_i$, i.e. $a > i$.
\end{proof}

\subsection{Runtime Analysis}
Consider any upward jump from $m_i$ to $m_k$ with $i > k$ that occurs right after $m_i$ stole $w_j$ from $m_k$. We call such a jump \textit{associated to} $w_j$, and it has \textit{size} $i-k$.

For any woman $w_j$, let $U_j$ be the sum of the sizes of all upward jumps associated to $w_j$. From Lemma \ref{lem2}, we know that the position of $w_j$'s partner can only move upward throughout the process, so we have $U_j \leq n-1$. Therefore, the sum of the sizes of all upwards jumps is $\sum_{j=1}^n U_j \leq n(n-1) = O(n^2)$. Since the scan starts at $m_1$ and ends at $m_n$, the total number of downward jumps equals to the sum of the sizes of all upward jumps plus $n-1$, hence the total number of steps in the whole algorithm is $O(n^2)$.

For each $m_i$, we keep an array of size $n$, with the $j$th entry storing the rank of $w_j$ in $P_{m_i}$. Each time we scan $m_i$, we query the minimum rank of available women, which is a consecutive range in the array. Using an appropriate range minimum query (RMQ) data structure such as the one introduced by Fischer \cite{fischer}, we can perform the scan with $O(n)$ preprocessing time per array and $O(1)$ query time. Therefore, the total runtime of our algorithm is $O(n^2)$.

In conclusion, we proved that our developed algorithm is correct and can be implemented in $O(n^2)$ time, which also implicitly proves the existence of a WSNM in any instance.

\begin{theorem} \label{thm1}
A weakly stable noncrossing matching exists in any instance with $n$ men and $n$ women with strict preference lists.
\end{theorem}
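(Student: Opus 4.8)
The plan is to assemble the statement as an immediate consequence of the two properties of the algorithm already established in the preceding subsections, since the existence claim is purely constructive: if the algorithm always halts on a valid WSNM, then such a matching exists for every instance. First I would invoke the \emph{correctness} half, proved at the start of the ``Proof of Correctness'' subsection: whenever the algorithm terminates, the output matching $M$ is noncrossing (because every proposal is made only to an available, hence accessible, woman) and admits no noncrossing blocking pair (because termination means no man is unstable, i.e.\ no man has an available entry ranked above his current partner, which is exactly the condition ruling out a noncrossing blocking pair). Thus any halting run certifies a WSNM for the given instance.

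Second I would invoke the \emph{finiteness} half. The delicate point — and the one I expect to be the genuine obstacle, were it not already handled — is that, unlike Gale--Shapley, here a woman's partner need not monotonically improve, so termination is not automatic and indeed fails under an arbitrary choice of unstable man. The resolution is the sequence of observations culminating in Lemma~\ref{lem2}: Lemma~\ref{lem00} (with Corollary~\ref{cor00}) forbids downward switches, Lemma~\ref{lem0} (with Corollary~\ref{cor0}) forbids a man losing his partner to a man above him, and together these force each woman's partner to move only downward over time. I would cite Lemma~\ref{lem2} to bound the total size of upward jumps associated to each woman by $n-1$, hence the total work by $O(n^2)$, so the scan — which advances from $m_1$ toward $m_n$ and only resets upward finitely often — must reach an empty $S$ in finitely many steps.

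Finally I would combine the two halves: since the algorithm halts on \emph{every} instance (finiteness) and its output on halting is always a WSNM (correctness), a WSNM exists in every instance with $n$ men and $n$ women and strict preference lists. The $O(n^2)$ implementation detail via the range-minimum-query structure is not needed for existence but confirms the constructive bound. In short, the theorem is the wrap-up corollary of the algorithm's termination and output guarantees, with all the real difficulty already absorbed into Lemmas~\ref{lem00}, \ref{lem0}, and \ref{lem2}.
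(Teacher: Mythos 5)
Your proposal matches the paper's own argument exactly: the theorem is stated there as the wrap-up of the algorithm's correctness (the contradiction argument in the ``Proof of Correctness'' subsection) combined with termination via Lemmas~\ref{lem00}, \ref{lem0}, and~\ref{lem2} and the $O(n^2)$ bound on upward-jump sizes. You even correctly read Lemma~\ref{lem2} as forcing each woman's partner to move only \emph{downward} (despite the paper's stray ``upward'' typo in the runtime analysis), so nothing is missing.
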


\begin{theorem} \label{thm2}
There is an $O(n^2)$ algorithm to find a weakly stable noncrossing matching in an instance with $n$ men and $n$ women with strict preference lists.
\end{theorem}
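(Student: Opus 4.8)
The plan is to combine three ingredients, two of which are already in place: correctness (the algorithm outputs a WSNM whenever it halts, as shown in the Proof of Correctness subsection), a bound of $O(n^2)$ on the total number of steps, and a bound of $O(1)$ on the work per step. Since correctness is established, the theorem reduces to showing that the algorithm terminates within $O(n^2)$ steps and that each step can be executed in constant time after $O(n^2)$ preprocessing.

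First I would bound the number of steps. By Lemma \ref{lem2}, once a woman $w_j$ ceases to be the partner of $m_i$ she can never again be matched with any man of index $\le i$; hence the index of $w_j$'s partner is strictly increasing at each of her partner changes and stays in $[1,n]$, so its total increase over the whole run is at most $n-1$. Each upward jump associated to $w_j$ changes her partner index from $k$ to $i$ with $i>k$, contributing exactly its size $i-k$ to that increase, so $U_j \le n-1$ and therefore $\sum_{j=1}^n U_j \le n(n-1)$. Since every upward jump has size at least one, the number of upward jumps is at most $\sum_j U_j = O(n^2)$. For the downward jumps I would use that the scan pointer starts at $m_1$ and finishes after $m_n$, and that each downward jump advances it by exactly one position --- the only cases that could make a downward jump longer or force a re-scan are ruled out by Lemma \ref{lem00} (Case 2.2.1) and Lemma \ref{lem0} (the $w_j = w_{last}$ exception). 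Equating total downward movement with total upward movement plus the net displacement $n-1$ gives exactly $\sum_j U_j + (n-1) = O(n^2)$ downward jumps, so the total number of steps is $O(n^2)$.

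Next I would account for the per-step cost. The only nontrivial operation in a scan of $m_i$ is, after locating the accessible range $[w_{first},w_{last}]$ and trimming it to the available range, to return the woman of minimum rank in $P_{m_i}$ within that consecutive block of women. Storing for each man an array of size $n$ whose $j$th entry is $r_{m_i}(w_j)$ and preprocessing it with a range-minimum-query structure such as Fischer's \cite{fischer} costs $O(n)$ per man, hence $O(n^2)$ in total, and answers each query in $O(1)$. The endpoints $m_{prev}$ and $m_{next}$ and the partner positions that delimit the range can be maintained in $O(1)$ per update with a doubly linked list over the currently matched men. Thus every step runs in $O(1)$ amortized, and multiplying by the $O(n^2)$ step bound, plus the one-time $O(n^2)$ preprocessing, yields the claimed $O(n^2)$ runtime.

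The hard part is the step count, and it rests entirely on the monotonicity captured by Lemma \ref{lem2}: without the guarantee that each woman's partner can only move down the line, the telescoping bound $\sum_j U_j \le n(n-1)$ collapses, and --- as the looping two-man instance in the text shows --- the process need not terminate at all. Everything downstream, namely converting that monotonicity into the upward-jump bound and relating upward jumps to downward jumps via the pointer's net displacement, is bookkeeping, and the data-structure layer is a routine application of constant-time RMQ.
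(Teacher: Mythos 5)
Your proposal is correct and follows essentially the same route as the paper: correctness from the earlier stability argument, the step bound via Lemma \ref{lem2} giving $U_j \le n-1$ and hence $\sum_j U_j \le n(n-1)$, the downward-jump count obtained from the pointer's net displacement (with the degenerate cases eliminated by Lemma \ref{lem00} and Lemma \ref{lem0}), and constant-time scans via Fischer's RMQ structure on per-man rank arrays. The only addition beyond the paper is your explicit remark on maintaining $m_{prev}$ and $m_{next}$ with a linked list, a routine detail the paper leaves implicit.
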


\begin{remark}
Our algorithm does not require the numbers of men and women to be equal. In the case that there are $n_1$ men and $n_2$ women, the algorithm works similarly with $O(n_1n_2)$ runtime. Also, in the case that people's preference lists are not strict, we can modify the instance by breaking ties in an arbitrary way. Clearly, a WSNM in the modified instance will also be a WSNM in the original one (because every noncrossing blocking pair in the original instance will also be a noncrossing blocking pair in the modified instance).
\end{remark}

\section{Discussion}
In this paper, we constructively prove that a WSNM always exists in any instance by developing an $O(n^2)$ algorithm to find one. Note that the definition of a WSNM allows multiple answers with different sizes for an instance. For example, in an instance of three men and three women, with $P_{m_1} = (w_3,w_1,w_2), P_{m_2} = (w_1,w_2,w_3), P_{m_3} = (w_2,w_3,w_1), P_{w_1} = (m_2,m_3,m_1), P_{w_2} = (m_3,m_1,m_2)$, and $P_{w_3} = (m_1,m_2,m_3)$, both $\{ (m_1,w_3)\}$ and $\{ (m_2,w_1),(m_3,w_2)\}$ are WSNMs, but our algorithm only outputs the first one with smaller size. A possible future work is to develop an algorithm to find a WSNM with maximum size in a given instance, which seems to be a naturally better answer then other WSNMs as it satisfies more people. Another possible future work is to develop an algorithm to determine whether an SSNM exists in a given instance, and to find one if it does.

Other interesting problems include investigate the noncrossing matching in the geometric version of the stable roommates problem where people can be matched regardless of gender. The most basic and natural setting of this problem is where people are represented by points arranged on a circle. A possible future work is to develop an algorithm to determine whether a WSNM or an SSNM exists in a given instance, and to find one if it does.

\end{document}